%
\documentclass[runningheads]{llncs}
\usepackage{amssymb}
\usepackage{textcomp}
\usepackage{amsmath}
\usepackage{mathtools}

\usepackage{amsthm}
\usepackage{algorithm}
\usepackage[noend]{algpseudocode}
\usepackage{comment}
\usepackage{graphicx}
\usepackage{color}
\usepackage{tikz}
\usepackage{cleveref}
\pagestyle{plain}

\setcounter{page}{1}
%
\DeclareMathOperator{\argmin}{arg\,min}
\DeclareMathOperator{\dist}{dist}

\begin{document}

\title{Time Efficient Implementation for Online $k$-server Problem on Trees
}
%
%
\author{Kamil Khadiev\inst{1} \and
Maxim Yagafarov\inst{1}}
\authorrunning{K. Khadiev, et al.}
%
\institute{Kazan Federal University, Kazan, Russia}
\maketitle              
\begin{abstract}
We consider online algorithms for the $k$-server problem on trees of size $n$. 
Chrobak and Larmore proposed a $k$-competitive algorithm for this problem that has the optimal competitive ratio. However, the existing implementations have  $O\left(k^2 + k\cdot \log n\right)$ or $O\left(k(\log n)^2\right)$ time complexity for processing a query, where $n$ is the number of nodes. We propose a new time-efficient implementation of this algorithm that has $O(n)$ time complexity for preprocessing and $O\left(k\log k\right)$ time for processing a query. The new algorithm is faster than both existing algorithms and the time complexity for query processing does not depend on the tree size. 
\keywords{online algorithms\and k-server problem on trees \and time complexity \and time efficient implementation.}
\end{abstract}
\section{Introduction}
\label{sec:intro}
Online optimization is a field of optimization theory that deals with optimization problems not know the future \cite{k2016}. An online algorithm reads an input piece by piece and returns an answer piece by piece immediately, even if the optimal answer can depend on future pieces of the input. The goal is to return an answer that minimizes an objective function (the cost of the output). The most standard method to define the effectiveness of an online algorithm is the competitive ratio
\cite{st85,kmrs86}.
 That is the worst-case ratio between the cost of the solution found by the algorithm and the cost of an optimal solution. If the ratio is $c$, then the online algorithm is called $c$-competitive. %
In the general setting, online algorithms have unlimited computational power. Nevertheless, many papers consider them with different restrictions. Some of them are restrictions on memory 
\cite{bk2009,blm2015,kkm2018,aakv2018,gs93,h95,kk2019disj,k2021,kl2020,kk2019online2w,kk2019,kkkmry2023,kk2022,kkzmkry2022},
others are restrictions on time complexity \cite{fnn2006,rbm2013,kkmsy2022,ky2022}.
This paper focuses on efficient online algorithms in terms of time complexity.
One of the well-known online minimization problems is the $k$-server problem on trees \cite{cl91}. Other related well-known problems are the matching problem, r-gathering problem, and facility assignment problem \cite{kp93,an2015,ark2020}.
The $k$-server problem on trees is the following. We have a tree with $n$ nodes and $k$ servers which are in some nodes. In online fashion, we receive queries that are nodes of the tree. Servers move by the graph and one of the servers should come to the query node. Our goal is a minimization of the total moving distance for all servers in all queries.
There is a $k$-competitive deterministic algorithm for the  $k$-server problem on trees \cite{cl91}, and the algorithm has the best competitive ratio. The expected competitive ratio for a best-known randomized algorithm \cite{bbmn2011,bbmn2015} is $O(\log^3 n\log^2 k)$. In this paper, we are focused on the deterministic one. So, the competitive ratio of the deterministic algorithm is the best possible. At the same time, the naive implementation has $O(n)$ time complexity for each query and preprocessing. There is a time-efficient algorithm for general graphs \cite{rbm2013} that uses a min-cost-max-flow algorithm, but it is too slow in the case of a tree. 
In the case of a tree, there are two efficient algorithms. The first one was introduced in \cite{kkmsy2022}. It has time complexity $O(n\log n)$ for preprocessing and $O\left(k^2+k\log n\right)$ for query processing. Another one was presented in \cite{ky2022}. It has time complexity $O(n)$ for preprocessing and $O\left(k(\log n)^2\right)$ for query processing.

We suggest a new time-efficient implementation of the algorithm from \cite{cl91}. It has $O(n)$ time complexity for preprocessing and $O\left(k\log k\right)$ for processing a query. It is based on the compression of tree technique called virtual tree and fast algorithms for computing Lowest Common Ancestor (LCA) \cite{bv93,bfc2000} and Level Ancestor (LA) Problem \cite{ah2000,bf2004,d1991}.
The presented algorithm has a better time complexity of query processing compared to all existing algorithms. The query processing is asymptotically better than both algorithms. At the same time, the query processing complexity of our algorithm does not depend on $n$. If $k$ is constant, then the complexity is $O(1)$.
The time complexity of preprocessing is the same as for the Algorithm from \cite{ky2022}, and better than the preprocessing complexity of the Algorithm from \cite{kkmsy2022}.


\vspace{-0.3cm}
\section{Preliminaries}
\label{sec:prelims}
\subsection{Online algorithms}
{\bf An online minimization problem} consists of a set $\cal{I}$ of inputs and a cost function. Each input $I = (x_1, \dots , x_M)$ is a sequence of requests, where $M=|I|$ is the length of the input. Furthermore, a set of feasible outputs (or solutions) $\widetilde{O}(I)$ is associated with each $I$; an output is a sequence of answers $O = (y_1, \dots, y_M)$. The cost function assigns a positive real value $cost(I, O)$ to $I\in{ \cal I}$ and $O\in\widetilde{O}(I)$. An optimal  solution for $I\in{\cal I}$ is $O_{opt}(I)=\argmin_{O\in\widetilde{O}(I)}cost(I,O)$.

Let us define an online algorithm for this problem.
{\bf A deterministic online algorithm}  $A$ computes the output sequence $A(I) = (y_1,\dots , y_M)$ such that $y_i$ is computed based on $x_1, \dots , x_i$.  
We say that $A$ is $c$-{\em competitive} if there exists a constant $\alpha\geq 0$ such that, for every $n$ and for any input $I$ of size $n$, we have: $cost(I,A(I)) \leq c \cdot cost(I,O_{Opt}(I)) + \alpha$. The minimal $c$ that satisfies the previous condition is called the {\bf competitive ratio} of $A$. 
\vspace{-0.5cm}
\subsection{Rooted Trees}\label{sec:lca}
Consider a rooted tree $T=(V,E)$, where $V$ is the set of nodes/vertices, and $E$ is the set of undirected edges. Let $n=|V|$ be the number of nodes, or equivalently the size of the tree. We denote by $1$ the root of the tree.
A path $P$ is a sequence of nodes $(v_1,\dots,v_h)$ that are connected by edges, i.e. $(v_i,v_{i+1})\in E$ for all $i\in\{1,\dots,h-1\}$, such that there are no duplicates among $v_1,\dots,v_h$. Here $h-1$ is the length of the path. Between any two nodes $v$ and $u$ on the tree, there is a unique path. The distance $\dist(v,u)$ is the length of this path.
For each node $v$ we can define a parent node $\textsc{Parent}(v)$ which is the first node on the unique path from $v$ to root $1$.  We have $\dist\left(1,\textsc{Parent}(v)\right)+1=\dist(1,v)$. Additionally, we can define the set of children $\textsc{Children}(v)=\{u: \textsc{Parent}(u)=v\}$. Any node $y$ on the unique path from root $1$ to node $v$ is an ancestor of node $v$. 


{\bf Lowest Common Ancestor (LCA).}
Given two nodes $u$ and $v$ of a rooted tree, the Lowest Common Ancestor is the node $w$ such that $w$ is an ancestor of both $u$ and $v$, and $w$ is the closest one to $u$ and $v$ among all such ancestors. The following result is well-known.

\begin{lemma}[\cite{bfc2000}]\label{lm:lca}
There is an algorithm for the LCA problem with the following properties:
    (i) The time complexity of the preprocessing step is $O(n)$;
    (ii)The time complexity of computing LCA for two nodes is $O(1)$.
\end{lemma}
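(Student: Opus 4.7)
The plan is to reduce LCA to Range Minimum Query (RMQ) on a sequence whose consecutive entries differ by exactly $\pm 1$, and then solve that restricted RMQ in $O(n)$ preprocessing with $O(1)$ queries. First I would run a single DFS from the root and record the \emph{Euler tour}: the sequence $E[1\mathinner{.\,.}2n-1]$ of nodes visited (a node is appended both when we enter it and every time we return to it from a child), together with the parallel sequence $D[1\mathinner{.\,.}2n-1]$ of depths and, for each node $v$, the index $\textsc{First}(v)$ of its first occurrence in $E$. A standard observation is that for any two nodes $u,v$, if $a=\textsc{First}(u)$ and $b=\textsc{First}(v)$ (assume $a\le b$), then $\LCA(u,v)=E[i^\star]$ where $i^\star=\argmin_{a\le i\le b} D[i]$. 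Building $E$, $D$, and $\textsc{First}$ takes $O(n)$ time.

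Next I would solve RMQ on $D$. The key exploitable property is that $|D[i+1]-D[i]|=1$, so $D$ is a $\pm 1$ sequence. I would use the standard two-level block decomposition: split $D$ into blocks of size $s=\tfrac{1}{2}\log_2 n$, precompute the minimum of each block, and build a sparse table over the block-minima; this sparse table has $O\bigl((n/s)\log(n/s)\bigr)=O(n)$ cells and can be filled in $O(n)$ time, and it answers any range-of-blocks minimum in $O(1)$ via the usual overlapping-powers-of-two trick.

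To handle the two partial blocks of a query in $O(1)$, I would exploit the $\pm 1$ structure: normalising each block by subtracting its first entry leaves a sequence of $\pm 1$ steps starting at $0$, of which there are at most $2^{s-1}=O(\sqrt{n})$ distinct patterns. For each such pattern I would precompute a table of minima for all $O(s^2)=O(\log^2 n)$ sub-intervals; the total table size is $O(\sqrt{n}\cdot \log^2 n)=o(n)$, built in the same bound. Each node then stores a pointer to its block's pattern (computable in $O(s)$ per block, $O(n)$ total), so an in-block RMQ is a single table lookup. An LCA query then consists of computing $\textsc{First}(u),\textsc{First}(v)$, doing two in-block RMQs and one sparse-table RMQ on the aligned whole blocks between them, taking the argmin of the three values, and returning $E$ at that index---all in $O(1)$.

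The main obstacle is the bookkeeping for the in-block table: one must argue that the set of block patterns really is determined solely by the $\pm 1$ increments (independent of the block's starting value), so that every block's minimum query can be answered from a shared table, and one must pack each block's pattern into an integer of $O(\log n)$ bits so the pattern can be identified in $O(1)$. Once these encoding details are in place, correctness of the LCA-to-RMQ reduction follows from the fact that on the Euler tour the deepest-node-free segment between two occurrences of $u$ and $v$ contains exactly one occurrence of their lowest common ancestor, at the unique minimum depth in that range. Combining the $O(n)$ bound for the Euler tour, the block minima, the sparse table, and the pattern tables gives the claimed $O(n)$ preprocessing and $O(1)$ query.
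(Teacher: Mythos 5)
Your proposal is correct and is precisely the argument of the cited reference \cite{bfc2000}: the paper does not prove this lemma itself but imports it, and your Euler-tour reduction to $\pm 1$-RMQ followed by the two-level block decomposition with shared pattern tables is exactly that construction. No gaps; all the stated bounds ($O(n)$ sparse-table cells, $O(\sqrt{n}\log^2 n)$ pattern-table size, $O(1)$ query via two in-block lookups and one block-level sparse-table query) check out.
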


We call $\textsc{LCA\_Preprocessing}()$ the subroutine that does the preprocessing for the algorithm and $\textsc{LCA}(u,v)$ that computes the LCA of two nodes $u$ and $v$.


{\bf Level Ancestor (LA) Problem.}
Given a node $v$ of a rooted tree and a distance $d$, we should find a vertex $u$ that is the ancestor of the node $v$ and has distance $d$ from the root. The following result is well-known. There are several algorithms with required complexity \cite{ah2000,bf2004,d1991}.

\begin{lemma}[\cite{ah2000,bf2004,d1991}]\label{lm:la}
There is an algorithm for the LA problem with the following properties:
    (i) The time complexity of the preprocessing step is $O(n)$;
    (ii)The time complexity of computing the required node is $O(1)$.
\end{lemma}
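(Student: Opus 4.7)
The plan is to combine three classical ingredients to obtain $O(n)$ preprocessing and $O(1)$ query for the Level Ancestor problem, following the approach of Bender--Farach-Colton and Dietz. The ingredients are: jump pointers (powers-of-two ancestors), ladder decomposition (a long-path decomposition whose paths are extended upward by their own length), and a macro/micro tree decomposition combined with a four-Russians-style lookup table for the micro subtrees.

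First, I would present the simpler $O(n\log n)/O(1)$ scheme that does not yet use micro trees, since it already embodies the main query idea. Build a long-path decomposition where each node's ``heavy'' child is a child of maximum height, so each long path terminates at the deepest descendant of its top. For every long path $P$ of length $L$, form a ladder of length $2L$ by prepending the $L$ ancestors above the top of $P$. In addition, precompute, for every node $v$, its $2^i$-th ancestors for $i=0,1,\dots,\lfloor \log_2 n\rfloor$. To answer $\mathrm{LA}(v,d)$, use one jump pointer to move $v$ up by $2^{\lfloor \log_2 d\rfloor}$ levels to a node $u$; because $u$'s long path has length at least $2^{\lfloor \log_2 d\rfloor}$ by the long-path property, its ladder has length at least $2\cdot 2^{\lfloor \log_2 d\rfloor}\ge d$ above $u$, so the desired ancestor is obtained by a direct offset lookup in that ladder.

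Next, to reduce the preprocessing to $O(n)$, I would apply the macro/micro decomposition. Fix a threshold $b=\Theta(\log n)$ and designate as \emph{macro} only those nodes whose subtree contains at least $b$ leaves; a standard counting argument bounds the number of macro nodes by $O(n/b)$. Build the jump pointers and ladders only on macro nodes, for a total cost of $O((n/b)\cdot\log n)=O(n)$. Each remaining \emph{micro} subtree has size $O(b)$, so its shape comes from a universe of $2^{O(b)}$ balanced-parenthesis strings; for $b=\tfrac{1}{4}\log n$ a global table indexed by (shape, internal node, query distance) has sublinear size and can be filled in sublinear time, resolving any intra-micro query in $O(1)$. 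A query $\mathrm{LA}(v,d)$ first consults this table either to output the answer directly (if the target lies inside $v$'s micro tree) or to hop to the nearest macro ancestor of $v$, and then invokes the macro-level ladder/jump-pointer scheme on the remaining distance, giving $O(1)$ overall.

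The main obstacle in carrying this out rigorously is the micro layer: one has to pick $b$ so that the global shape/node/distance table simultaneously fits in $o(n)$ words and still leaves macro nodes sparse enough to afford $O(\log n)$ jump pointers each, and one has to encode each micro tree canonically (for example by its Euler tour or parenthesis sequence) so that identical shapes share table entries and the encoding itself can be computed in $O(1)$ per micro tree using word-RAM operations. Once the encoding and table are in place, the $O(1)$ query bound reduces to bookkeeping across the macro/micro boundary, and correctness follows from the same doubling argument on long paths used in the $O(n\log n)/O(1)$ scheme.
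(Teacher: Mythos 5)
The paper does not prove this lemma at all: it is imported as a black-box citation to \cite{ah2000,bf2004,d1991}, so you are reconstructing a proof the authors never give. Your reconstruction follows the right source (the jump-pointer/ladder/micro--macro scheme of Bender and Farach-Colton), and the first half is sound: the $O(n\log n)/O(1)$ argument --- jump up by $2^{\lfloor\log_2 d\rfloor}=x$ to $u$, note that $u$ has height at least $x$ because $v$ lies $x$ below it, so the long path through $u$ has length $L\ge x$ and its ladder contains at least $L\ge x\ge d-x$ ancestors of $u$ --- is correct (your phrase ``length at least $2x$ above $u$'' should really be ``at least $L\ge x$ nodes above $u$,'' but that is cosmetic).

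The macro/micro layer, however, has two genuine gaps. First, the number of macro nodes is \emph{not} $O(n/b)$: in a path of length $n$ with $b$ leaves attached at the bottom, essentially every node is macro, so building jump pointers at all macro nodes costs $\Theta(n\log n)$, which defeats the purpose. The correct counting argument only bounds the number of \emph{leaves of the macro tree} by $O(n/b)$ (their subtrees are disjoint and each contains at least $b$ nodes). To exploit this you need the additional reduction --- absent from your sketch --- that $\mathrm{LA}(v,d)$ equals $\mathrm{LA}(u,d)$ for any descendant $u$ of $v$, so jump pointers need only be stored at macro-tree leaves and every macro query is routed through a precomputed leaf descendant. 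Second, defining macro nodes by the number of \emph{leaves} in their subtree does not bound the size of micro trees: a bare path of length $n$ has one leaf, so the entire tree would be a single micro tree, and your four-Russians table over shapes of size $O(b)$ no longer applies. The threshold must be on the number of \emph{descendants}, which guarantees each micro tree has fewer than $b$ nodes and makes the $2^{O(b)}$ shape universe legitimate. Also note that ladders cost $O(n)$ in total and should simply be built for the whole tree; restricting them to macro nodes is neither necessary nor the source of the savings. With these repairs your outline matches the cited construction.
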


We call $\textsc{LA\_Preprocessing}()$ the subroutine that does the preprocessing for the algorithm and $\textsc{LA}(v,d)$ that computes the ancestor of the node $v$ and has distance $d$ from the root.
\vspace{-0.5cm}
\subsection{$k$-server Problem on Trees}
Let $T=(V,E)$ be a rooted tree, and we are given $k$ servers that can move among nodes of $T$. The tree and initial positions of the servers are given. At each time slot, a request $q\in V$
appears as an input. We have to ``serve'' this request, that is, to choose one of the $k$ servers and
move it to $q$. The other servers are also allowed to move. The cost function is the distance by which we move the servers.
In other words, if before the request, the servers are at nodes $v_1,\dots,v_k$ and after the request they are at $v'_1,\dots,v'_k$, then $q\in\{v'_1,\dots,v'_k\}$ and the cost of the move is $\sum_{i=1}^k \dist(v_i,v'_i)$. The cost of a sequence of requests is the sum of the costs of serving all requests. The problem is to design a strategy that minimizes the cost
of servicing a sequence of requests given online.

\section{Algorithm}
Firstly, we describe Chrobak-Larmore's $k$-competitive algorithm for $k$-server problem on trees from \cite{cl91}.
Assume that we have a request on a node $q$, and the servers are on the nodes $v_1,\dots,v_k$.
We say that a server $i$ is {\em active} if there are no other servers on the path from $v_i$ to $q$. If there are several servers in a node, then we assume that the server with the minimal id in initial enumeration is {\em active} and others are not {\em active}.
 In each phase, we move every {\em active} server one step towards the node $q$. After each phase, the set of {\em active} servers can change.
 We repeat this phase (moving the active servers) until one of the servers reaches the queried node $q$.

In this section, we present an effective implementation of Chrobak-Larmore's algorithm with preprocessing. The preprocessing part is done once and has $O(n)$ time complexity (Theorem \ref{th:preproc}). The request processing part is done for each request and has $O\left(k\log k\right)$ time complexity (Theorem \ref{th:query-proc}).
 
\subsection{Preprocessing}
We do the following steps for the preprocessing 
:
    %
    (i) We invoke preprocessing for the LCA algorithm (Section \ref{sec:lca}).
    %
     (ii) We invoke preprocessing for the LA problem (Section \ref{sec:lca}). 
    %
     (iii) For each node $v$ we compute the distance from the root to $v$, i.e. $\dist(1,v)$. This can be done using a depth-first search (DFS) algorithm \cite{cormen2001}.
During the DFS algorithm, we store in and out time for a node. We use a common counter variable as a timer and increase it in each event of entering or leaving a node. For a node $v$ we store values of the timer on entering the node event in $t_{in}(v)$ and on leaving the node event in $t_{out}(v)$. We use two arrays for storing the values of $t_{in}$ and $t_{out}$ for every node. The implementation of step is $\textsc{ComputeDistance}()$ procedure which implementation is presented in Appendix \ref{apx:dist}.
The implementation of the whole preprocessing procedure is presented in Algorithm \ref{alg:preproc}, and the time complexity is discussed in Theorem \ref{th:preproc}.
%
%
%
%
%
\begin{algorithm}[ht]
    \caption{$\textsc{Preprocessing}$. The preprocessing procedure.} \label{alg:preproc}
    \begin{algorithmic}
        \State $\textsc{LCA\_Preprocessing()}$
        \State $\textsc{LA\_Preprocessing()}$
        \State $\dist(1,1)\gets 0$
        \State $\textsc{ComputeDistance(1)}$
    \end{algorithmic}
\end{algorithm}
\begin{theorem}\label{th:preproc}
The preprocessing has time complexity $O(n)$.
\end{theorem}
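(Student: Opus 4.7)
The plan is to bound the running time of each of the four lines of Algorithm~\ref{alg:preproc} separately and add them up. Since Algorithm~\ref{alg:preproc} performs a constant number of subroutine calls with no outer loop, it suffices to show that each subroutine is $O(n)$.

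First, I would invoke Lemma~\ref{lm:lca} to conclude that $\textsc{LCA\_Preprocessing}()$ takes $O(n)$ time, and Lemma~\ref{lm:la} to conclude the same for $\textsc{LA\_Preprocessing}()$. The assignment $\dist(1,1)\gets 0$ is clearly $O(1)$.

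The only substantive step is to bound the cost of $\textsc{ComputeDistance}(1)$. I would argue that this procedure is a standard DFS traversal starting from the root: on entering a child $u$ of the current node $v$, it sets $\dist(1,u)\gets \dist(1,v)+1$, records $t_{in}(u)$, recurses, and records $t_{out}(u)$. Each node is entered and left exactly once, and the per-node work (updating the distance, incrementing the timer, writing $t_{in}$ and $t_{out}$) is $O(1)$. Since each edge is traversed exactly twice in total, the overall cost of the DFS is $O(n+|E|)=O(n)$ because $T$ is a tree and $|E|=n-1$.

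Summing the four contributions $O(n)+O(n)+O(1)+O(n)$ yields the claimed $O(n)$ bound. There is no real obstacle here; the only point that requires a bit of care is to make sure the DFS implementation in $\textsc{ComputeDistance}$ (deferred to Appendix~\ref{apx:dist}) indeed performs only $O(1)$ work per node and per edge traversal, which is straightforward for the listed bookkeeping operations.
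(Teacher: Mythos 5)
Your proposal is correct and follows essentially the same route as the paper: decompose Algorithm~\ref{alg:preproc} into its constituent calls, apply Lemma~\ref{lm:lca} and Lemma~\ref{lm:la} for the two $O(n)$ preprocessing subroutines, and bound $\textsc{ComputeDistance}$ by the $O(n)$ cost of a DFS on a tree. Your additional detail on the per-node and per-edge work of the DFS is a harmless elaboration of the same argument.
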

\begin{proof}
The time complexity of the preprocessing phase is $O(n)$ for LCA, $O(n)$ for the LA problem, and $O(n)$ for $\textsc{ComputeDistance}()$ as a complexity of depth-first search algorithm \cite{cormen2001}. Therefore, the total time complexity is $O(n)$.
\end{proof}

\subsection{Query Processing}
Assume that we have a query on a node $q$, and the servers are on the nodes $v_1,\dots,v_k$.
As a first step, we construct a Virtual Tree. We describe this process in the Virtual Tree subsection of the current Section. It is a compressed version of the original tree that contains the query node, server nodes, and their LCA nodes. We show that we can use only this tree for processing the query. The size of the compressed tree is $O(k)$. So, it allows us to obtain a small time complexity for processing the query. The whole algorithm of processing a query is presented in Section \ref{sec:query-algorithm} (Algorithm for Query Processing).
\subsubsection{Virtual Tree}\label{sec:virtual-tree}
Let $V^s=\{v_1,\dots,v_k,q\}$ be a set of nodes that contains servers and the query. Let $V^{lca}=\{LCA(v,w): v,w\in V^s\}$ be a set of all LCA nodes for pairs from $V^s$.
 Let us sort nodes from $V^s$ according to the DFS traversal order. In other words, we have $v_{i_1},\dots,v_{i_{k+1}}$ such that $t_{in}(v_{i_j})\leq t_{in}(v_{i_{j+1}})$. Let $V^{olca}=\{LCA(v_{i_j},v_{i_{j+1}}): 1\leq j\leq k\}$ be set of all LCA nodes for sequential pairs of server nodes in the DFS traversal order. Let us show two properties:
 (i) $V^{lca}= V^s\cup V^{olca}$ (in Lemma \ref{lm:lca-join});
 (ii) $|V^{lca}|\leq  2|V^s|$  (in Lemma \ref{lm:lca-size}).
These two properties show that we can easily compute the set $V^{lca}$, and it is not much larger than the original set $V^s$.

\begin{lemma}\label{lm:lca-join}
$V^{lca}= V^s\cup V^{olca}$.
\end{lemma}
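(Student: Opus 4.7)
\medskip

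\noindent\textbf{Proof plan.}
The plan is to prove the two inclusions separately; the first is immediate and the second uses a standard range-$\LCA$ fact for DFS-sorted nodes.

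For the inclusion $V^s\cup V^{olca}\subseteq V^{lca}$, I would observe two trivial facts. First, for any $v\in V^s$ we have $\LCA(v,v)=v$, so every element of $V^s$ belongs to $V^{lca}$. Second, each element of $V^{olca}$ is by definition $\LCA(v_{i_j},v_{i_{j+1}})$ for some pair of nodes in $V^s$, hence also lies in $V^{lca}$.

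For the reverse inclusion $V^{lca}\subseteq V^s\cup V^{olca}$, take arbitrary $v,w\in V^s$ and write them in DFS order as $v=v_{i_a}$ and $w=v_{i_b}$ with $a\le b$. If $a=b$ then $\LCA(v,w)=v\in V^s$ and we are done. Otherwise I would use the following well-known property of DFS traversal: for nodes $v_{i_a},v_{i_{a+1}},\dots,v_{i_b}$ sorted by their $t_{in}$ values, the common ancestor $\LCA(v_{i_a},v_{i_b})$ equals the shallowest (minimum-depth) node among the consecutive LCAs $\LCA(v_{i_j},v_{i_{j+1}})$ for $a\le j<b$. Consequently $\LCA(v,w)=\LCA(v_{i_{j^*}},v_{i_{j^*+1}})$ for some index $j^*\in\{a,\dots,b-1\}$, and this node lies in $V^{olca}$.

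The only nontrivial ingredient is the range-$\LCA$ fact, so the main obstacle is to justify it cleanly. I would argue it by induction on $b-a$: the base case $b-a=1$ is by definition, and for the inductive step let $z=\LCA(v_{i_a},v_{i_{b-1}})$ and $z'=\LCA(v_{i_{b-1}},v_{i_b})$; since the DFS order respects the tree structure (subtrees occupy contiguous $t_{in}$-intervals), both $z$ and $z'$ are ancestors of $v_{i_{b-1}}$, hence lie on the root-to-$v_{i_{b-1}}$ path and are comparable. A short case analysis then shows $\LCA(v_{i_a},v_{i_b})$ is the shallower of $z$ and $z'$, which combined with the inductive hypothesis yields the claim. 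This completes the argument.
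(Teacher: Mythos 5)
Your proof is correct, but it takes a genuinely different route from the paper's. Both arguments agree on the easy inclusion ($\LCA(v,v)=v$ gives $V^s\subseteq V^{lca}$, and $V^{olca}\subseteq V^{lca}$ by definition). For the reverse inclusion, the paper argues by contradiction: assuming $par=\LCA(u,v)\notin V^s\cup V^{olca}$, it decomposes the children of $par$ into the subtree containing $u$, the subtree containing $v$, and the subtrees in between, picks the \emph{first} index $j$ in DFS order with $v_{i_j}$ in $v$'s subtree, and observes that $v_{i_{j-1}}$ and $v_{i_j}$ then lie in different subtrees of $par$, so $par=\LCA(v_{i_{j-1}},v_{i_j})\in V^{olca}$. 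You instead prove the stronger range-$\LCA$ identity --- that $\LCA(v_{i_a},v_{i_b})$ is the minimum-depth node among the consecutive $\LCA$s over the range --- by induction on $b-a$. Your inductive step is sound: $z$ and $z'$ are both ancestors of $v_{i_{b-1}}$ and hence comparable, and the contiguity of DFS intervals forces the subtree of $\LCA(v_{i_a},v_{i_b})$ to contain $v_{i_{b-1}}$, which is exactly what makes the two-case analysis close. What each approach buys: the paper's argument is shorter and needs only an existence claim (some consecutive pair realizes $par$), while yours identifies \emph{which} consecutive $\LCA$ it is and yields a reusable structural fact (the standard ingredient behind Euler-tour/RMQ-based $\LCA$ and virtual-tree constructions), at the cost of an extra induction. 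Either is acceptable for the lemma as stated.
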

\begin{proof}
By definition, $ V^{olca}\subseteq V^{lca}$.  At the same time, $LCA(v,v)=v$. Therefore, each node $v\in V^s$ belongs to $V^{lca}$. So, $V^s\subseteq V^{lca}$. Hence, $ V^s\cup V^{olca}\subseteq V^{lca}$. To finish the proof, we should show that $V^{lca}\subseteq V^s\cup V^{olca}$.
Assume that we have a node $par\in V^{lca}$ such that $par\not\in  V^s\cup V^{olca}$. In other words, we have $u,v\in V^s$ such that $LCA(u,v)\not\in  V^s\cup V^{olca}$. W.l.o.g. we can assume that $u$ has a smaller index than the index of $v$ in the order $i_1,\dots,i_{k+1}$.  Let $T_u$ be a sub-tree of $par$ and $u\in T_u$. Let $T_v$ be a sub-tree of $par$ and $v\in T_v$. Let $T_{m}$ be a set of all other sub-trees of $par$ that are between $T_u$ and $T_v$ in the DFS traversal order. See Figure \ref{fig:lca-proof} for illustration.

\begin{figure}
    \centering
    \includegraphics[width=0.35\textwidth]{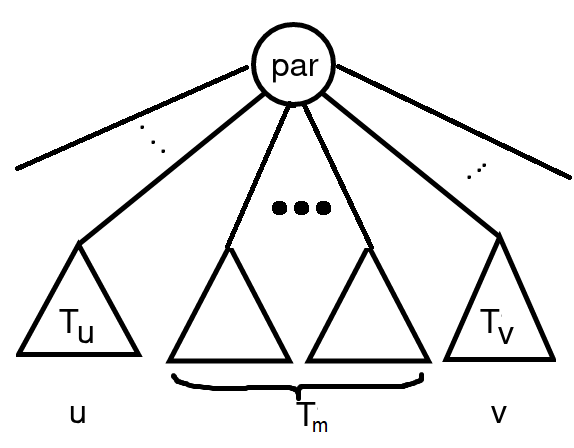}
    \caption{Subtrees of the $par=LCA(u,v)$ node.\label{fig:lca-proof}}
\end{figure}

 Let $j$ be the minimal index such that $v_{i_j}\in T_v$. Therefore, $v_{i_{j-1}}\in T_u \cup T_m$ because we have at least $u$ from these sub-trees that has smaller index in the DFS traversal order. The nodes $v_{i_j}$ and $v_{i_{j-1}}$ are in different sub-trees of $par$. Hence, $par=LCA(v_{i_{j-1}}, v_{i_{j}})$, and $par\in V^{olca}$. We obtain a contradiction with the assumption. Therefore, any $par\in V^{lca}$ belongs to $V^s\cup V^{olca}$. So, we have $V^{lca}\subseteq V^s\cup V^{olca}$. Finally, we obtain the claim of the lemma.
\end{proof}
\begin{lemma}\label{lm:lca-size}
$|V^s|\leq |V^{lca}|\leq  2|V^s|$.
\end{lemma}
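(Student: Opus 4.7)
My plan is to derive both inequalities directly from Lemma \ref{lm:lca-join}, which has already established $V^{lca}=V^s\cup V^{olca}$.

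For the lower bound $|V^s|\leq |V^{lca}|$, I would simply note that $LCA(v,v)=v$ for any $v\in V^s$, so $V^s\subseteq V^{lca}$, and the cardinality inequality follows immediately. Alternatively, this is already implicit in the proof of Lemma \ref{lm:lca-join}.

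For the upper bound $|V^{lca}|\leq 2|V^s|$, I would bound $|V^{olca}|$ by a counting argument. Recall $|V^s|=k+1$ since $V^s=\{v_1,\dots,v_k,q\}$. The set $V^{olca}$ is defined by taking $LCA(v_{i_j}, v_{i_{j+1}})$ for indices $1\leq j\leq k$ in the DFS-sorted list, which yields at most $k$ elements. Hence
\[
|V^{lca}|\;=\;|V^s\cup V^{olca}|\;\leq\;|V^s|+|V^{olca}|\;\leq\;(k+1)+k\;=\;2k+1\;\leq\;2(k+1)\;=\;2|V^s|,
\]
using Lemma \ref{lm:lca-join} in the first equality.

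There is no real obstacle here: the only subtle point is noticing that the index $j$ ranges over exactly $k$ values (one fewer than $|V^s|$), which is why the factor of $2$ suffices rather than something larger. All the structural work has already been carried out in Lemma \ref{lm:lca-join}, so the present lemma is essentially a cardinality bookkeeping corollary of it.
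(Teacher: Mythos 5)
Your proof is correct and follows essentially the same route as the paper: $V^s\subseteq V^{lca}$ gives the lower bound, and the upper bound comes from $|V^{olca}|\leq |V^s|-1$ (you write it concretely as $k$ versus $k+1$) combined with Lemma \ref{lm:lca-join} and a union bound. The only cosmetic difference is that the paper keeps $|V^s|$ symbolic while you substitute $k+1$.
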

\begin{proof}
Due to construction of $V^{olca}$, we have $|V^{olca}|\leq |V^s|-1$.
Due to Lemma \ref{lm:lca-join}, we have $V^{lca}\subseteq V^s\cup V^{olca}$. Therefore, $|V^{lca}|\leq |V^s|+ |V^{olca}|\leq |V^s|+|V^s|-1\leq 2|V^s|$. At the same time, $V^s\subseteq V^{lca}$. So, $|V^s|\leq |V^{lca}|$.
\end{proof}

A virtual tree $T^{virt}$ is a weighted tree constructed by the tree $T$ and the set $V^s$ that has the following structure. The set of the virtual tree's nodes is $V^{lca}$. There is an edge $(u,v)$ in $T^{virt}$ iff there is no $x\in V^{lca}$ such that $x\neq u, x\neq v$ and $dist(x,u)+dist(x,v)=dist(u,v)$. Note that $dist$ is the distance between nodes in the original tree $T$. In other words, there is no $x\in V^{lca}$ such that it is on the path between $u$ and $v$ in the original tree $T$. The weight of the edge $(u,v)$ in $T^{virt}$ is $w^{virt}(u,v)=dist(u,v)$. Let us present an example of the virtual tree in Figure \ref{fig:virt-tree}.

\begin{figure}
    \centering
    \includegraphics[width=\textwidth]{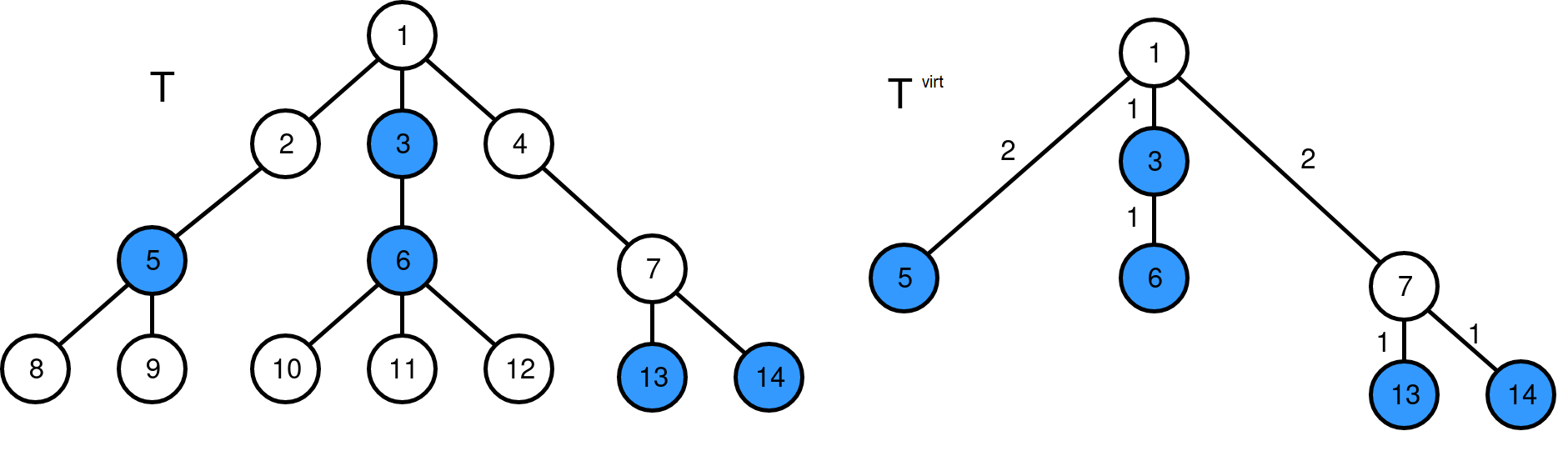}
    \caption{The virtual tree $T^{virt}$ for the tree $T$ and set $V^s=\{3,5,6,13,14\}$.\label{fig:virt-tree}}
\end{figure}

Let us present the algorithm for constructing the virtual tree $T^{virt}$.
\begin{itemize}
\item[] \textbf{Step 1.} Let us sort all nodes of $ V^{lca}$ in the DFS traversal order. In the algorithm, we assume that $ V^{lca}$ is sorted.
\item[] \textbf{Step 2.} Let us use a stack ${\cal S}$ for nodes from  $ V^{lca}$. Initially, we push the first node $v$ from $ V^{lca}$.
\item[] \textbf{Step 3.} For each node $v$ from  $ V^{lca}$ we do Steps 4, 5 and 6.
\item[] \textbf{Step 4.} We remove all nodes from the top of ${\cal S}$ that are not ancestors of $v$. We can check the property in constant time using arrays $t_{in}$ and $t_{out}$. A node $u$ is an ancestor of a node $v$ iff $t_{in}(u)\leq t_{in}(v)$ and $t_{out}(v)\leq t_{out}(u)$. We assume that we have $\textsc{IsAncestor}(u,v)$ procedure that returns $True$ iff $u$ is an ancestor of $v$. 
\item[] \textbf{Step 5.} If after Step 4 the stack is not empty, then we take the top node (without removing it from the stack). Let it be $u$. Then, we add an edge $(u,v)$ to the tree $T^{virt}$. If the stack is empty, then we do nothing.
\item[] \textbf{Step 6.} We push the node $v$ to the stack.
\end{itemize}

Let the algorithm be $\textsc{ConstructVirtualTree}(T,v_1,\dots,v_k,q)$ procedure and its implementation is presented in Algorithm \ref{alg:virt-tree}.
We assume that we have the following procedures:
\begin{itemize}
\item  $\textsc{AddEdge}(T^{virt},u,v)$ procedure for adding the edge $(u,v)$ to the tree $T^{virt}$.
\item $\textsc{InitStack}({\cal S})$ procedure for initialization an empty stack.
\item $\textsc{Push}({\cal S},v)$ procedure for putting a node $v$ to the top of the stack.
\item $\textsc{Pop}({\cal S})$ procedure for removing a node from the top of the stack.
\item $\textsc{Pick}({\cal S})$ procedure that returns the top element of the stack but does not remove it.
\item $\textsc{IsEmpty}({\cal S})$ procedure that returns $True$ iff  the stack is empty; and otherwise $False$.
\end{itemize}

 \begin{algorithm}[ht]
    \caption{$\textsc{ConstructVirtualTree}(T,v_1,\dots,v_k,q)$ procedure for constructing virtual tree $T^{virt}$ for the original tree $T$ and the set of server nodes $V^s$} \label{alg:virt-tree}
    \begin{algorithmic}
        \State $(v_{i_1},\dots,v_{i_{k+1}})\gets\textsc{Sort}(v_1,\dots,v_k,q)$\Comment{Sorting $V^s$ in the DFS traversal order} 
        \State $V^{olca}\gets\{\}$\Comment{Initially $V^{olca}$ is empty} 
        
        \For{$j\in\{1,\dots,k\}$}
            \State  $V^{olca}\gets V^{olca}\cup\{\textsc{LCA}(v_{i_j},v_{i_{j+1}})\}$        
        \EndFor
          \State $V^{lca}\gets V^{s}\cup V^{olca}$ 
          \State $(b_1,\dots b_{|V^{lca}|})\gets\textsc{Sort}(V^{lca})$
          \State $\textsc{InitStack}({\cal S})$
          \State $\textsc{Push}({\cal S},v_{b_1})$
          \For{$j\in\{2,\dots,|V^{lca}|\}$}
          \While{$\textsc{IsEmpty}({\cal S})=False$ and $\textsc{IsAncestor}(\textsc{Pick}({\cal S}),v_{b_j})=False$}
			\State    $\textsc{Pop}({\cal S})$       
          \EndWhile
          \If{$\textsc{IsEmpty}({\cal S})=False$ }
          \State $\textsc{AddEdge}(T^{virt},\textsc{Pick}({\cal S}),v_{b_j})$   
          \EndIf
          \State $\textsc{Push}({\cal S},v_{b_j})$
          \EndFor
    \end{algorithmic}
\end{algorithm}


The complexity of the algorithm is presented in the next lemma.
\begin{lemma}
The time complexity of Algorithm \ref{alg:virt-tree} is $O(k\log k)$.
\end{lemma}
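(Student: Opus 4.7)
The plan is to bound each step of Algorithm \ref{alg:virt-tree} separately and observe that the two sorting calls dominate. The initial call $\textsc{Sort}(v_1,\dots,v_k,q)$ orders a set of $k+1$ nodes by their $t_{in}$ values, so any comparison-based sort costs $O(k\log k)$. The subsequent \textbf{for}-loop makes $k$ calls to $\textsc{LCA}$, each of which runs in $O(1)$ by Lemma \ref{lm:lca}, so assembling $V^{olca}$ takes $O(k)$ time. Forming $V^{lca}=V^s\cup V^{olca}$ is $O(k)$, and by Lemma \ref{lm:lca-size} we have $|V^{lca}|\le 2|V^s|=O(k)$, so the second $\textsc{Sort}$ likewise costs $O(k\log k)$.

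For the main traversal loop I would first argue that every primitive operation appearing inside it is $O(1)$: $\textsc{InitStack}$, $\textsc{Push}$, $\textsc{Pop}$, $\textsc{Pick}$, $\textsc{IsEmpty}$ and $\textsc{AddEdge}$ are constant-time by assumption, while $\textsc{IsAncestor}(u,v)$ amounts to comparing $t_{in}(u),t_{in}(v),t_{out}(u),t_{out}(v)$, which were precomputed during preprocessing, so it also runs in $O(1)$.

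The only delicate point is the inner \textbf{while}-loop, whose iteration count varies between outer iterations. The hard part is thus the amortized analysis, but it is standard: each node of $V^{lca}$ is pushed onto ${\cal S}$ exactly once over the entire execution (in the $\textsc{Push}({\cal S},v_{b_j})$ line at the end of the outer iteration handling it), hence it can be popped by the inner loop at most once. Therefore the total number of inner-loop iterations summed over all $j$ is at most $|V^{lca}|-1=O(k)$, and the \textbf{for}-loop runs in $O(k)$ in total.

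Adding the contributions yields $O(k\log k)+O(k)+O(k\log k)+O(k)=O(k\log k)$, which proves the lemma.
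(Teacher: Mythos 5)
Your proof is correct and follows essentially the same route as the paper: bound the two sorts by $O(k\log k)$ using Lemma \ref{lm:lca-size}, charge the LCA calls at $O(1)$ each, and amortize the stack loop by noting each node of $V^{lca}$ is pushed and popped at most once. The only cosmetic difference is that the paper builds $V^{lca}$ via a self-balanced search tree at $O(\log k)$ per insertion (to handle deduplication), whereas you assert the union costs $O(k)$; either way the sorting term dominates and the $O(k\log k)$ bound stands.
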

\begin{proof}
Sorting $k$ nodes in $V^s$ has $O(k\log k)$  time complexity. The procedure $\textsc{LCA}$ has $O(1)$ time complexity due to Lemma \ref{lm:lca}. We can implement sets $V^{lca}$ and $V^{olca}$ as Self-Balanced Search Tree \cite{cormen2001}. Due to Lemma \ref{lm:lca-size}, $|V^s|,|V^{lca}|,|V^{olca}|=O(k)$.
 So, the complexity of adding an element to a set has $O(\log k)$ time complexity \cite{cormen2001}. Therefore, constructing $V^{lca}$ has $O(k\log k)$ time complexity.
The complexity of Step 1 is $O(k\log k)$ because of the complexity of sorting.
We add each node of $V^{lca}$ to the stack once and remove it once. Procedures $\textsc{IsAncestor}$, $\textsc{IsEmpty}$, $\textsc{Pop}$, $\textsc{Pick}$, $\textsc{AddEdge}$, and $\textsc{Push}$ have $O(1)$ time complexity. Therefore, the total time complexity of the for-loop is $O(k)$.
So, the total time complexity of the algorithm is $O(k\log k)+O(k)=O(k\log k)$.
\end{proof}
\subsubsection{Algorithm for Query Processing}\label{sec:query-algorithm}

We have several events. Each event is moving an active server or deactivating an active server. Note that during such events, active servers can be only in nodes from $V^{lca}$. Only at the deactivating moment, a server can move to a node that does not belong to $V^{lca}$.  So, we implement an algorithm that moves servers by the virtual tree $T^{virt}$. At the same moment, we have a procedure $\textsc{Move}(u,z)$ that moves a server from a node $v$ to the distance $z$ in the direction to the query $q$ in the original tree $T$. We discuss the $\textsc{Move}(u,z)$ procedure in Section \ref{sec:move}.

Let us discuss the algorithm. Let us fix the node $q$ as a root of the virtual tree $T^{virt}$. 
The algorithm contains two main phases.

\textbf{Phase 1.} For each node $v\in V^{lca}$, we compute two values that are a node $\textsc{Closest}(v)$ and an integer $\textsc{DistToClosest}(v)$ that are
\begin{itemize}
\item $\textsc{Closest}(v)$ is the closest server to the node $v$ from the subtree with the root $v$. If there are several such servers, then $\textsc{Closest}(v)$ is the server with the smallest id in initial enumeration among them.  In other words, it is the server that comes to the node $v$ if we have the only subtree with the root $v$. 
\item $\textsc{DistToClosest}(v)$ is the distance from $\textsc{Closest}(v)$ to $v$
\end{itemize}
We can compute these two values for each node using the depth-first search algorithm.
Assume that we are processing $v$ and already invoked the procedure for children and we know values of $\textsc{Closest}$ and $\textsc{DistToClosest}$ for them. In that case $\textsc{DistToClosest}(v)=min\{\textsc{DistToClosest}(u)+w^{virt}(u,v): u\in\textsc{Children}(v)\}$, and $\textsc{Closest}(v)=\textsc{Closest}(u)$ for the corresponding child. Note that if the node contains a server ($v\in \{v_1,\dots,v_k\}$), then $\textsc{Closest}(v)=v$ and $\textsc{DistToClosest}(v)=0$. Note that we can implement the set of servers using a Self-balanced Binary Search Tree (for example, Red-Black tree \cite{cormen2001}) and check the property with $O(\log k)$ time complexity.   Let the implementation of the phase be a $\textsc{ClosestComputing}(v)$  procedure that is presented in Algorithm \ref{alg:phase1}. We assume that we have $\textsc{Id}(v_i)$ function that returns id in initial enumeration for a server $v_i$.

\begin{algorithm}[ht]
    \caption{$\textsc{ClosestComputing}(v)$ procedure for the first phase. Here $v\in V^{lca}$ is the processed node.} \label{alg:phase1}
    \begin{algorithmic}
        
        \For{$u\in \textsc{Children}(v)$}
          \State $\textsc{ClosestComputing}(u)$          
        \EndFor
        \If{$v\in \{v_1,\dots,v_k\}$}
        \State $\textsc{Closest}(v)\gets v$
        \State $\textsc{DistToClosest}(v)\gets 0$
        \Else
        \State $v_c\gets NULL$
        \State $d_c\gets NULL$
         \For{$u\in \textsc{Children}(v)$}
         \If{$d_c=NULL$ or $d_c>\textsc{DistToClosest}(u)+w^{virt}(u,v)$ or ($d_c=\textsc{DistToClosest}(u)+w^{virt}(u,v)$ and $\textsc{Id}(v_c)>\textsc{Id}(\textsc{Closest}(u))$)}
         
          \State $d_c\gets \textsc{DistToClosest}(u)+w^{virt}(u,v)$ 
          \State $v_c\gets \textsc{Closest}(u)$
          \EndIf         
        \EndFor
       	\State $\textsc{Closest}(v)\gets v_c$
       	\State $\textsc{DistToClosest}(v)\gets d_c$
        \EndIf
    \end{algorithmic}
\end{algorithm}

\textbf{Phase 2.} In this phase, we compute new positions of servers. This phase is also based on the depth-first search algorithm.
 We start with the root node $r$. Note that we fix the node $q$ as a root of the virtual tree. We know that the $\textsc{Closest}(r)$ server comes to this node in $\textsc{DistToClosest}(r)$, and it processes the query. It means that 
\begin{itemize}
 \item the new position of server $\textsc{Closest}(r)$ is $q$;
 \item any other servers  become inactive at most after $\textsc{DistToClosest}(r)$ steps. 
\end{itemize}
Moreover, we can say that any server $\textsc{Closest}(u)$, for $u\in \textsc{Children}(r)$  become inactive exactly after $\textsc{DistToClosest}(r)$ steps. It is correct because no other server comes to $u$ in their subtree, but  $\textsc{Closest}(r)$ comes faster to $q$ than others. So, we can move all servers  $v\in\{\textsc{Closest}(u): u\in \textsc{Children}(r)\}\backslash\{\textsc{Closest}(r)\}$ to the distance $\textsc{DistToClosest}(r)$, and it is the final position of these servers.

Let us look at the children of a node $u$, where  $u\in\textsc{Children}(r)$. There are two options:
\begin{itemize}
\item all servers from children becomes inactive because of $\textsc{Closest}(r)$ server;
\item all servers from children becomes inactive because of $\textsc{Closest}(u)$ server. In that case, the $\textsc{Closest}(u)$ server can come to $u$ faster than in $\textsc{DistToClosest}(r)$ steps. At the same time, the way of the $\textsc{Closest}(u)$ server to $r$ takes more than $\textsc{DistToClosest}(r)$ steps. 
\end{itemize} 

So, we take $b\gets \min\{\textsc{Closest}(r), \textsc{Closest}(u)\}$, and be sure that any child of $u$ becomes inactive in $b$ steps.

Let us describe the general case. Assume that we process a node $v$ and $b$ is the number of steps for deactivating $\textsc{Closest}(v)$.

Firstly, we update $b\gets \min\{b,\textsc{DistToClosest}(v)\}$. Then, we consider all children of $v$ that are $u\in \textsc{Children}(v)$. If $\textsc{Closest}(u)=\textsc{Closest}(v)$, i.e. this server comes to the node, then we ignore it (we already moved them on processing the parent of $v$). For other servers, we move $\textsc{Closest}(u)$ to $b$ steps.

Finally, we invoke the same function for all children of $v$ and the new value of $b$.

 Let the implementation of the phase be a procedure $\textsc{UpdatePositions}(v,b)$ that is presented in Algorithm \ref{alg:phase2}.
 \begin{algorithm}[ht]
    \caption{$\textsc{UpdatePositions}(v,b)$ procedure for second phase. Here $v$ is the processed node, and in $b$ steps $\textsc{Closest}(v)$ becomes inactive.} \label{alg:phase2}
    \begin{algorithmic}
        \State $b\gets \min\{b,\textsc{DistToClosest}(v)\}$
       
        \For{$u\in \textsc{Children}(v)$}
          \If{$\textsc{Closest}(v)\neq \textsc{Closest}(u)}$
           \State $\textsc{Move}(\textsc{Closest}(u),b)$
        \EndIf                 
        \EndFor
      \For{$u\in \textsc{Children}(v)$}          
           \State $\textsc{UpdatePositions}(u,b)$                  
        \EndFor
    \end{algorithmic}
\end{algorithm}

Let us present the final algorithm for processing a query $q$ in Algorithm \ref{alg:query}. Assume that $\textsc{Root}(T^{virt})$ is the root node of the virtual tree $T^{virt}$.

 \begin{algorithm}[ht]
    \caption{$\textsc{ProcessingAQuery}(q,T, v_1,\dots,v_k)$ procedure for processing a query $q$ on the tree $T$ and initial positions of servers $v_1,\dots,v_k$.} \label{alg:query}
    \begin{algorithmic}
        \State $\textsc{ConstructVirtualTree}(T,v_1,\dots,v_k,q)$
        \State $r\gets\textsc{Root}(T^{virt})$	
        \State $\textsc{ClosestComputing}(r)$
        \State $\textsc{Move}(\textsc{Closest}(r),\textsc{DistToClosest}(r))$
        \State $\textsc{UpdatePositions}(r,\textsc{DistToClosest}(r))$
    \end{algorithmic}
\end{algorithm}
\vspace{-0.2cm}
\subsection{Moving a Server}\label{sec:move}
\vspace{-0.1cm}
We now consider the following problem: given a server $v$ and a distance $z$, how to efficiently compute the new position of the server after moving it $z$ steps towards $q$. As a subroutine, we use the solution of the Level Ancestor (LA) Problem from Section \ref{sec:lca}.


Let $l=LCA(v,q)$. If $\dist(l,v)\geq z$, then the result node is on the path between $v$ and $l$. Note that $l$ is an ancestor of $v$, therefore, the target node is also an ancestor of $v$. We can say that the result node is on the distance $d=dist(1,v)-z$ from the root node. So, the target node is the result of $\textsc{LA}(v,dist(1,v)-z)$ procedure from Section \ref{sec:lca}. Otherwise, we should move the server first to $l$. Then, we move it $z-\dist(l,v)$ steps 
down towards $q$ from $l$. In this case, the target node is an ancestor of $q$ and the distance from the root is $d=dist(1,l)+z-\dist(l,v)$. So, we can find it using the result of  $\textsc{LA}(v,dist(1,l)+z-\dist(l,v))$ procedure. The algorithm is presented in Algorithm \ref{alg:move}.

  \begin{algorithm}[ht]
    \caption{$\textsc{Move}(v,z)$. Moves of a server from $v$ to distance $z$ on a path from $v$ to $q$.} \label{alg:move}
    \begin{algorithmic}
        \State $l=\textsc{LCA}(v,q)$
        \If{ $z\leq \dist(l,v)$}
        \State $Result \gets \textsc{LA}(v,dist(1,v)-z)$
        \EndIf
       \If{ $z>\dist(l,v)$}
        \State $Result \gets \textsc{LA}(q,dist(1,l)+z-\dist(l,v))$
        \EndIf
        \State \Return{$Result$}
    \end{algorithmic}
\end{algorithm}

\begin{lemma}\label{lm:move}
    The time complexity of \textsc{Move} is $O(1)$.
\end{lemma}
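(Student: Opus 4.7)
The plan is to inspect Algorithm \ref{alg:move} line by line and verify that every primitive operation runs in constant time, so that the total is $O(1)$.

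First, I would observe that the call $\textsc{LCA}(v,q)$ costs $O(1)$ by Lemma \ref{lm:lca}, since the LCA preprocessing has already been performed during the global preprocessing (Algorithm \ref{alg:preproc}). Next, the distances $\dist(1,v)$ and $\dist(1,l)$ are precomputed during the DFS in $\textsc{ComputeDistance}$ and stored in an array, so each of them is a single array lookup; and $\dist(l,v)$ can be obtained in $O(1)$ as $\dist(1,v)-\dist(1,l)$ since $l$ is an ancestor of $v$. Consequently the comparison $z\leq\dist(l,v)$ and the arithmetic expressions $\dist(1,v)-z$ and $\dist(1,l)+z-\dist(l,v)$ used as the second argument of $\textsc{LA}$ are all computable in $O(1)$.

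Finally, each of the two possible calls $\textsc{LA}(v,\dist(1,v)-z)$ and $\textsc{LA}(q,\dist(1,l)+z-\dist(l,v))$ runs in $O(1)$ time by Lemma \ref{lm:la}, using the level-ancestor structure that was also built in preprocessing. Summing a constant number of $O(1)$ operations yields an overall running time of $O(1)$ for $\textsc{Move}$, which establishes the lemma.

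There is no real obstacle here: the only thing worth double-checking is that the two cases in the algorithm indeed identify the correct depth for the target node, but this is a matter of correctness rather than complexity, and correctness was already argued in Section \ref{sec:move}. The bound on the running time is an immediate consequence of the $O(1)$ query guarantees of the LCA and LA data structures combined with the precomputed depths.
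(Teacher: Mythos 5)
Your proof is correct and follows essentially the same route as the paper's: both argue that the \textsc{LCA} and \textsc{LA} calls are $O(1)$ by Lemmas \ref{lm:lca} and \ref{lm:la}, and that the needed distances are available in $O(1)$ from the preprocessing. Your version is slightly more explicit (spelling out $\dist(l,v)=\dist(1,v)-\dist(1,l)$), but there is no substantive difference.
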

\begin{proof}
The time complexity of the $\textsc{LA}$ procedure is $O(1)$  due to Lemma \ref{lm:la}. The time complexity of the $\textsc{LCA}$ procedure is $O(1)$  due to Lemma \ref{lm:lca}. Furthermore, we can compute the distance between any two nodes in $O(1)$ thanks to the preprocessing. Therefore, the total complexity is $O(1)$.
\end{proof}

\begin{theorem}\label{th:query-proc}
The time complexity of the query processing phase is $O\left( k\log k\right)$.
\end{theorem}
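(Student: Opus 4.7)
The plan is to decompose Algorithm \ref{alg:query} into its four constituent calls and bound each one separately, then sum. The call to $\textsc{ConstructVirtualTree}$ is already analyzed in the preceding lemma and costs $O(k\log k)$, and by Lemma \ref{lm:lca-size} the resulting virtual tree $T^{virt}$ has at most $2(k+1)=O(k)$ nodes, which is the crucial structural fact I will use throughout. The single explicit $\textsc{Move}$ call on the root is $O(1)$ by Lemma \ref{lm:move}, so the real work is to bound $\textsc{ClosestComputing}(r)$ and $\textsc{UpdatePositions}(r,\cdot)$.

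For $\textsc{ClosestComputing}$, I would observe that it is a DFS over $T^{virt}$, so each node is visited once and the body does $O(\deg(v))$ comparisons, giving $O(|V^{lca}|)=O(k)$ comparisons in total. The only non-constant cost per node is the membership test $v\in\{v_1,\dots,v_k\}$; as noted in the text, representing the server set by a self-balanced BST makes each test cost $O(\log k)$. Summing over the $O(k)$ nodes of $T^{virt}$ gives $O(k\log k)$. The distance $w^{virt}(u,v)$ used in the update of $d_c$ is computed in $O(1)$ thanks to the preprocessed depth values, so nothing else is hidden in the loop.

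For $\textsc{UpdatePositions}$, I would again use that the recursion traverses each node and each edge of $T^{virt}$ once. At each node $v$ the two for-loops together perform $O(\deg(v))$ work, with the only non-trivial per-edge operation being the $\textsc{Move}$ call, which is $O(1)$ by Lemma \ref{lm:move}. Hence this phase runs in $O(|V^{lca}|)=O(k)$ time. Putting everything together yields
\[
O(k\log k)+O(k\log k)+O(1)+O(k)=O(k\log k),
\]
which is exactly the bound claimed in Theorem \ref{th:query-proc}.

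The main potential obstacle is the per-node cost in $\textsc{ClosestComputing}$: if one naively tested $v\in\{v_1,\dots,v_k\}$ by linear scan, the total would become $O(k^2)$. This is why I would explicitly invoke the balanced BST implementation of the server set, turning each test into $O(\log k)$ and making $\log k$ the only factor beyond the linear traversal of the $O(k)$-sized virtual tree. No other step depends on $n$, so the final bound does not involve the tree size at all.
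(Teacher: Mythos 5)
Your proposal is correct and follows essentially the same decomposition as the paper: $O(k\log k)$ for constructing the virtual tree, $O(k\log k)$ for the two DFS passes (with the $\log k$ factor coming from the balanced-BST membership test, which you rightly identify as the only non-constant per-node cost), and $O(k)$ for the $\textsc{Move}$ calls via Lemma \ref{lm:move}. Your version merely spells out the per-node accounting that the paper's one-line proof leaves implicit.
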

\begin{proof}
The complexity of constructing a virtual tree is $O(k \log k)$. The complexity of the two depth-first search algorithms is $O(k\log k)$. The complexity of moving all servers is $O(k)$. So, the total complexity is $O(k \log k)$.
\end{proof}
\vspace{-0.6cm}
\section{Conclusion}
\vspace{-0.3cm}
In the current work, we present a new algorithm with $O(n)$ time complexity for preprocessing and $O(k\log k)$ for processing a query. Notably, other existing algorithms have worse complexities. Additionally, we proposed that the query processing complexity does not depend on $n$. The complexity of preprocessing seems the best possible. 
At the same time, the existence of an algorithm with $O(k)$ query processing time complexity is still an open question.

%
%
%
\bibliographystyle{splncs04}
\bibliography{biblio}
\newpage
\appendix
\section{Implementation of \textsc{ComputeDistance}() procedure}\label{apx:dist}
The procedure is based on the DFS algorithm. The base recursive algorithm is presented as a procedure $\textsc{ComputeDistanceBase}()$ in Algorithm \ref{alg:distBase}, and the whole algorithm is presented in Algorithm \ref{alg:dist}. 

\begin{algorithm}[ht]
    \caption{$\textsc{ComputeDistanceBase}(v)$. The base recursive procedure for processing a node $v$. We assume that we have a global $timer$ integer variable.} \label{alg:distBase}
    \begin{algorithmic}
   
        \State $t_{in}(v)\gets timer$
        \State $timer\gets timer+1$
        \For{$u\in \textsc{Children}(v)$}
        \State $dist(1,u)\gets dist(1,v)+1$
        \State $\textsc{ComputeDistanceBase}(u)$
        \EndFor
            \State $t_{out}(v)\gets timer$
        \State $timer\gets timer+1$
    \end{algorithmic}
\end{algorithm}

\begin{algorithm}[ht]
    \caption{$\textsc{ComputeDistance}(v)$. The distance computing algorithm.} \label{alg:dist}
    \begin{algorithmic}
   
        \State $timer\gets 0$
        \State $dist(1,1)\gets 0$
        \State $\textsc{ComputeDistanceBase}(1)$
    \end{algorithmic}
\end{algorithm}
\end{document}